\documentclass{article}
\usepackage[utf8]{inputenc}
\usepackage{amssymb}
\usepackage{amsmath} 
\usepackage{amsthm}
\usepackage{tikz-cd}
\usepackage{authblk}
\usepackage{cancel}
\usepackage{url}

\newtheorem{theorem}{Proposition}

\title{Establishing a  pre-logical setting \\ in a quantum model of psychoanalytic theory\\
}
\author{Giulia Battilotti}
\author{Rosapia Lauro Grotto}
\affil{Dept. of Health Sciences, University of Florence \\
giulia.battilotti@icloud.com - rosapia.laurogrotto@unifi.it}

\date{\today}

\begin{document}

\maketitle

\begin{abstract}
A crucial issue both in cognitive and psychoanalytical theories deals with the origin of mental representations.
In order to explore this issue, the paper analyzes a pre-logical setting, by considering a formalized approach to the foundations of psychoanalysis in logic,   interpreting and integrating the views  by Freud, Matte Blanco, Klein and Bion. The formalized approach derives from a quantum model of spin states.
    A representation of the spin state of a particle in first order logic is abstracted to get a modality interpretable as an abstract projector. The last can be decomposed into a positive, negative and irreal component. The irreal component cannot emerge and, in logic, is absorbed by the two others, giving rise to logical duality. Due to its treatment of undefiniteness and coherence, the paper is meant to contribute to quantum cognition, in its particular sense of affective quantum cognition.
\end{abstract}

\noindent Keywords: Affective quantum cognition, psychoanalytic theory, undefiniteness, representation, negation, modal projector, spin observable, pre-logical setting.

\section{Introduction}

The recent development of formal approaches to psychoanalytic theory seems to provide a way to develop a new model of mental processes. This is based on the central idea that it is possible to describe in formal terms the dynamics of representations endowed with affects, which play a central role both in psychoanalytic theory and in cognitive approaches.
In fact, psychoanalysis and contemporary cognitive sciences, the latter including both the classic ‘cold’ approach and the contemporary paradigm of embodied and affective cognition, do indeed share a common focus regarding representation and emotion \cite{Lg20}. The importance of affects in cognitive processes was first stressed long time ago by Vjgotskji, who said that there is an affect behind every thought.

 Therefore the aim of our multi-disciplinary research is to investigate about a logical basis 
to develop formal approaches in cognitive studies and their applications, by building on the foundations of psychoanalytic theory, that is the Freudian theory of psychic representations. 
Then the paper advances  a formalized approach to the theoretical foundations of psychoanalysis in logic. In particular, our proposal can support the links of psychoanalysis and Artificial Intelligence, that are gaining an increasing interest, see \cite{Po}.

The paper characterizes a pre-logical setting, interpreting and integrating different theoretical  views developed in psychoanalysis. 
Psychoanalysis deals with particular abstract issues of our mind very important in order to understand human thinking: the issues of  coherence vs separation, distinguishability vs indistinguishability, definiteness vs undefiniteness. Indeed,  the issues  are well present in the construction of our thinking, since they are rooted in its unconscious origin. 
Our analysis is founded on such abstract issues of thinking, and aims at clarifying them, in logical terms. 

The natural environment in which our proposal is developed  is a quantum model, since the quantum world offers a unique model to treat such kind of issues, as  widespreadly witnessed by the literature. 
Actually, in our model, the point is to  grasp the concept of quantum state itself. 
In the  paper we adopt a logical representation of quantum states in first order language \cite{Ba14}, and its successive abstraction by means of modalities \cite{BBL2, BBL3}. The method comes from basic logic \cite{SBF}, and consists of defining the logical constants by putting suitable equations which can import the metalinguistic links into sequents. The equations we put allows for an analysis of the modalities that follows from the algebraic splitting of the spin observable into components given by the Pauli matrices. Overall, the infinite/indefinite vs finite features of the modalities are considered and discussed. 

In order to establish the correspondence between the quantum model and the psychoanalytic model, 
we  first see the correspondence of the representation of quantum states with Freud's very idea of representation \cite{Fr91}, namely Freud's proposal of how the mind  creates its  own objects. The existence of such a correspondence allows for the extension of the quantum model to successive developments of Freudian theory,  proposed  by Freud himself \cite{Fr00,Fr23,Fr25}, Klein \cite{Kl}, Bion \cite{BiL, BiE}, Matte Blanco \cite{MB75,MB88}, in which the logical/structural aspects of the  mind are in evidence. In particular, we adopt as a basis Matte Blanco's finding that the structural feature of our thinking as originated by the Unconscious relies in the infinite \cite{MB75}. His proposal, for us, has been an assist to our goal:  to develop a formalized view of the Structural Unconscious, as characterized in The Interpretation of Dreams \cite{Fr00}. Moreover, the pre-logic we aim to introduce avails of the successive development of the Freudian Second Topic \cite{Fr23}, in order to introduce the normative element on which logic itself is founded. The discussion of modalities finds a particular correspondence with the seminal paper {\em Negation}, published exactly 100 years ago by Freud \cite{Fr25}.  Then, we consider the theory of Object Relations, as introduced by Klein \cite{Kl} and developed by Bion in the so-called epistemic trilogy \cite{BiL,BiE,BiT}, to analyze how our mind develops psychic representations and hence structures its knowledge. All the formalism is developed consistently with the quantum model.

The present proposal continues the previous papers \cite{BBL1,BBL2,BBL3,BBL4}. It can be located in the framework of Affective Quantum Cognition. The field of Quantum Cognition \cite{PB} has  increasing applications. Its particular aspect of Affective Quantum Cognition, whose theoretical formal basis can be traced back to the pioneering proposal  \cite{Kh02}, see then \cite{Iu18} and \cite{IKM}, and that finds support on Quantum Theories of Consciousness \cite{HP}, is now applied in A.I.,  see the review paper \cite{YU}. In particular, one could consider the recent application \cite{HH}, that {\em provides a system for simulating and handling affective interactions among various agencies from an understanding of the relations between quantum algorithms and the fundamental nature of psychology.}
Indeed, 
  the Structural Unconscious, as conceived by Freudian Theory,  is like an agent, that is able to carry positive effectiveness into  decision processes. This fact has been independently experimentally tested in psychology time ago, \cite{Di} and \cite{AD}, and it is interesting to ask oneself why and how. By analyzing a pre-logical framework able to mediate between the logic of the Unconscious and rational thinking, our work would like to support,  in particular, the introduction of this kind of new agent in formal applications to decision making.

 \section{Coherence conditions}

In building our pre-logical setting, we aim to meet the view of logic that F.Enriques proposed more than a century ago, in \cite{En06}. In his words\footnote{“Riconosciamo, ad ogni modo, che la Logica può riguardarsi come un insieme di norme, le quali {\em debbono}
osservarsi, {\em se si vuole} la coerenza del pensiero. Ma
ciò può anche essere espresso dicendo, che: fra i varii
procedimenti mentali, se ne distinguono alcuni, in cui
vengono volontariamente soddisfatte certe condizioni
di coerenza, i quali si denominano appunto procedimenti logici.
{\em In questo senso la Logica può riguardarsi come una
parte della Psicologia.}" }:

    “Anyway, we recognize that Logic can be regarded as a set of norms, which {\em must} be observed {\em to the aim of} coherence of thinking. On the other hand, this can also be expressed by saying that: among the different mental procedures, some are characterized, in which certain coherence conditions, that indeed are termed logical procedures, are willingly satisfied. {\em In this sense, Logic can be regarded as a part of Psychology.}". Italics in the text.

Coherence of thinking cannot be referred only to the logical-deductive processes, but it has to take into consideration the way in which mental objects are built. In this perspective,  among the different theories in psychology, we have started from the psychoanalytic conceptualization centered on the very concept of psychic representation.
So let us start from the idea that coherence is required in order to characterize mental objects dealt with by logic.
According to Freud, mental objects are first characterized by 
{\em thing-presentations}, 
a concept at the basis of his theorization, \cite{Fr91}. 
Thing presentations are non-verbal open representations of objects operated by the Unconscious.
Thing presentations can access consciousness only when closed by words. 
Then,  {\em word presentations} are the closed representation of objects managed by consciousness. They require  pre-existing thing-presentations.
\subsection{Coherence by infinite singletons}
How to represent, formally, a thing-presentation? We investigate about coherence  on the basis of the idea of variable on a domain.
In order to include thing-presentations in logic, we consider quantified formulae on non-extensional domains termed {\em infinite singletons} \cite{Ba14b,BBL1}.  
Infinite singletons are characterized intensionally rather than extensionally:
one says that $V$ is a singleton if and only if it satisfies the equivalence 
\begin{equation}\label{infsingl}
(\forall x\in V)A(x)\equiv (\exists x\in V)A(x)    
\end{equation}
for every formula $A$,
whereas the consequence:
\begin{equation}\label{extsingl}
  z\in D\vdash z=u  
\end{equation}
for some closed term $u$, is not assumed. 

The idea of infinite singleton can be derived in a direct way from  Matte Blanco's logical characterization of the Structural Unconscious, proposed in his {\em The Unconscious as infinite sets} \cite{MB75}. He characterized the mode of the Unconscious (Symmetric Mode) by two principles: 
\begin{itemize}
\item the Symmetry Principle 
\item the Generalization Principle
\end{itemize}
As discussed in \cite{MB75}, among the consequences of the two principles, one finds that 
all relations are symmetric
and all sets are infinite
for the Unconscious.
If two different elements in a set are characterized, an order can be put.
The necessary conclusion is that the Unconscious operates on infinite singletons. 
This means:
The Unconscious cannot characterize! Such a feature is coherent with the features of the mental representations as we observe them in dreams, as well as with the typical symptoms of schizofrenic thinking, \cite{MB75}, \cite{BBL1}.

Freud characterized the process of the Unconscious (Primary Process) in The Interpretation of Dreams \cite{Fr00}. 
The Primary Process has the following features:

\begin{itemize}

\item Displacement;
\item Condensation;
\item Absence of contradiction;
\item Substitution of the external reality with the internal one;
\item Timelessness.
\end{itemize}
\noindent

Let us consider the first two features, displacement and condensation, that  are the mental procedures possible only in presence of the strong coherence allowed by the Unconscious. As discussed in \cite{BBL1}, we see that they have an immediate translation in terms of  infinite singletons.
Displacement means  the displacement of a property from an element to “another”. 
It occurs in any infinite singleton. For, if $z\in U$ and $A(z)$, then $(\exists x\in U)A(x)$, but then $(\forall x\in U)A(x)$, by definition of infinite singleton.
Condensation means that any two “different" objects can condense into a unique one. 
Any two infinite singletons $U$ and $V$ cannot be distinguished in the symmetric mode. For, $z\in U$ and $z\notin V$ is impossible, since the Unconscious can establish a membership, but it cannot exclude one, by the generalization principle.  Therefore, any two infinite singletons are forced to condense together. 
Then,  when  the Symmetric Mode finds no obstacle at all, one gets  a unique object and thinking is impossible, see \cite{MB88}. So infinite singletons create a too strong coherence.
\medbreak
The first question is then: 
\begin{center}
{\em How  to avoid the spreading of coherence?}    
\end{center}
\noindent According to Freud, \cite{Fr91}, 
the closure of the representations is produced by the characterization of an external reality. As discussed in \cite{Lg20}, the mediation with the Symmetric Mode, namely the relation of mental processes with the external reality, can happen with the support of  the so-called “transitional objects", whose abstract features are that they must be found in the real external world, on one side, however they must have an a-temporal, a-causal nature, on the other, see \cite{Wi}, quoted in \cite{Lg20}.
Once the external reality has been established for the mind,  objects are defined and separated. So coherence stops since representations are closed. 
When representations are closed, infinite sets {\em unfold}  (Matte Blanco's terminology, see \cite{MB75}),  into well-defined descriptions of  objects, leading to the conception of  finite sets and singletons, see \cite{Lg20} for a discussion.

\subsection{Looking for the right amount of coherence in logic}
Let us  have a closer look at the translation of the above points  in first order language, in which infinite singletons are characterized.
In the logical language, constants are adopted for word-presentations,  when objects are defined, in the conscious process of thinking. On the contrary, let us assume that variables are the witness of an abstract pre-existing attitude, that is even present prior  to representations themselves, as we will see later, and that can be also consciously recovered after the representations. Then, in the logical language, let us consider quantifiers, adopting the definitions
introduced in  \cite{MS}, obtained by putting suitable equations,  as in basic logic \cite{SBF}.

The definition of universal quantifier on a domain $D$ is the following: 
\begin{equation}\label{defforall}
  \Gamma (-z)\vdash (\forall x\in D) A(x) \quad\equiv\quad \Gamma (-z), z\in D \vdash A(z)  
\end{equation}
where we adopt the notation $\Gamma (-z)$ to mean that $\Gamma$ is closed with respect to the free variable $z$, that is a variable of the language.
In the following (subsection \ref{duality}), we shall need also the dual definition of existential quantifier, given as follows (see \cite{MS}, and \cite{SBF} for duality):
\begin{equation}\label{defexists}
 (\exists x\in D) A(x) \vdash \Delta(-z)  \quad\equiv\quad A(z), z\in D \vdash \Delta(-z)  
\end{equation}

The definition of universal quantifier is compared to the definition of the “omega-quantifier" $\forall_\omega$:
\begin{equation}\label{defforallomega}
  \Gamma \vdash (\forall_\omega x\in D) A(x) \quad\equiv\quad \Gamma \vdash A(t)\;\mbox{{\em forall}} \;t\in D  
\end{equation}
where $t$ is a closed term of the language denoting a parameter (for the sake of simplicity we adopt the same notation for the element of $D$ and the closed term denoting it in the object language). 
Notice that defining $\forall_\omega$ on the domain $D$ requires that $D$ is described by a set of closed terms of the language. In particular, if $D$ is described by a finite set of $n$ terms $t_1,\dots, t_n$, the equation defines the propositional  conjunction of $n$ formulae $A(t_1)\&\dots \& A(t_n)$, that is the additive conjunction in terms of linear logic. If $D$ is described as a singleton, by a closed term $u$, $(\forall_\omega x\in D) A(x) $
is $A(u)$. The dual of $\forall_\omega$ is $\exists_\omega$, defined as follows:
\begin{equation}\label{defexistsomega}
(\exists_\omega x\in D) A(x) \vdash \Delta  \quad\equiv\quad A(t)\vdash \Delta  \;\mbox{{\em forall}} \;t\in D      
\end{equation}
In particular,  $(\exists_\omega x\in D) A(x)$ is the additive disjunction $A(t_1)\vee\dots \vee A(t_n)$ if $D$ is described by a finite set of $n$ terms $t_1,\dots, t_n$, it is $A(u)$ If $D$ is described as a singleton.
One can prove that:

\begin{theorem}\label{propequiv}
$(\forall x\in D) A(x)$ and $(\forall_\omega x\in D) A(x)$ are equivalent formulae if and only if the membership relation $z\in D$ is fully described by closed terms, that is one has $z\in D\vdash (\exists_\omega x\in D)x=z$. 
\end{theorem}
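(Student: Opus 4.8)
The plan is to unfold the phrase ``equivalent formulae'' into the two entailments $(\forall x\in D)A(x)\vdash(\forall_\omega x\in D)A(x)$ and $(\forall_\omega x\in D)A(x)\vdash(\forall x\in D)A(x)$, and to show that the first always holds while the second is exactly equivalent to the membership condition. First I would record the elimination rules hidden in the defining equations. Reading (\ref{defforall}) with $\Gamma(-z):=(\forall x\in D)A(x)$ and using the identity sequent on the left gives $(\forall x\in D)A(x),\,z\in D\vdash A(z)$; instantiating the variable $z$ at any closed term $t$ with $\vdash t\in D$ and cutting yields $(\forall x\in D)A(x)\vdash A(t)$ for every $t\in D$, which by (\ref{defforallomega}) is precisely $(\forall x\in D)A(x)\vdash(\forall_\omega x\in D)A(x)$. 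Thus one half of the equivalence is unconditional, and the content of the statement lies entirely in the converse entailment.

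For the direction $(\Leftarrow)$, assume $z\in D\vdash(\exists_\omega x\in D)\,x=z$. Reading (\ref{defforallomega}) with $\Gamma:=(\forall_\omega x\in D)A(x)$ gives the elimination $(\forall_\omega x\in D)A(x)\vdash A(t)$ for every $t\in D$. Combining this with the Leibniz rule for equality, $A(t),\,t=z\vdash A(z)$, via cut produces $(\forall_\omega x\in D)A(x),\,t=z\vdash A(z)$ for every $t\in D$. Now the left-hand definition (\ref{defexistsomega}) of $\exists_\omega$, applied with matrix $x=z$ in the presence of the side context $(\forall_\omega x\in D)A(x)$, folds this $t$-indexed family of sequents into the single sequent $(\forall_\omega x\in D)A(x),\,(\exists_\omega x\in D)\,x=z\vdash A(z)$. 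Cutting against the hypothesis $z\in D\vdash(\exists_\omega x\in D)\,x=z$ delivers $(\forall_\omega x\in D)A(x),\,z\in D\vdash A(z)$, which by (\ref{defforall}) is exactly $(\forall_\omega x\in D)A(x)\vdash(\forall x\in D)A(x)$. Together with the unconditional half this gives the equivalence.

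For the direction $(\Rightarrow)$, I would test the assumed equivalence on the single formula $A(x):=(\exists_\omega w\in D)\,w=x$, which asserts that $x$ coincides with some closed term of $D$. For each closed $t\in D$ the sequent $\vdash A(t)$ holds, since reflexivity $t=t$ together with $\vdash t\in D$ gives $(\exists_\omega w\in D)\,w=t$ by (\ref{defexistsomega}); hence by (\ref{defforallomega}) one obtains $\vdash(\forall_\omega x\in D)A(x)$. The assumed equivalence transports this to $\vdash(\forall x\in D)A(x)$, and the elimination rule for $\forall$ recorded above, namely $(\forall x\in D)A(x),\,z\in D\vdash A(z)$, then yields by cut $z\in D\vdash A(z)$, that is $z\in D\vdash(\exists_\omega w\in D)\,w=z$, which is the desired membership condition after renaming the bound variable $w$ to $x$.

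The main obstacle I anticipate is the legitimacy of the two ``reflection'' steps that pass between the definitional equations and their rule form in the presence of side contexts---in particular, the folding of the $t$-indexed family of sequents into one $\exists_\omega$-hypothesis in the $(\Leftarrow)$ direction. This requires that the defining equations (\ref{defforall})--(\ref{defexistsomega}) be read, as in basic logic, as generating genuine introduction and elimination rules closed under arbitrary contexts, and that the equality predicate carry reflexivity and the Leibniz substitution rule. Granting these standard structural facts, the choice of the test formula $A(x):=(\exists_\omega w\in D)\,w=x$ in the $(\Rightarrow)$ direction is the only genuinely creative step.
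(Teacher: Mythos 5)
Your proposal is correct, and two of its three parts coincide with the paper's proof: the unconditional derivation of $(\forall x\in D)A(x)\vdash(\forall_\omega x\in D)A(x)$ by instantiation and cut, and the $(\Leftarrow)$ direction (elimination of $\forall_\omega$, Leibniz equality, folding the $t$-indexed family via the left rule for $\exists_\omega$, then cutting against the hypothesis) are step-for-step the paper's argument. Where you genuinely diverge is the $(\Rightarrow)$ direction. The paper tests the assumed equivalence on the \emph{negated} equality $A(x)\equiv x\neq y$ and then passes ``by duality'' to the sequent $z=y,\,z\in D\vdash(\exists_\omega x\in D)(x=y)$, finishing with the right rule for $\exists$ and the equivalence of $(\exists x\in D)\,x=y$ with $y\in D$. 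You instead test on the \emph{positive} formula $A(x)\equiv(\exists_\omega w\in D)\,w=x$: reflexivity gives $\vdash A(t)$ for each closed $t\in D$, hence $\vdash(\forall_\omega x\in D)A(x)$, the assumed entailment transports this to $\vdash(\forall x\in D)A(x)$, and the $\forall$-elimination sequent plus cut yields $z\in D\vdash(\exists_\omega w\in D)\,w=z$ directly. Your route buys simplicity and self-containment---no negation, no appeal to duality, no auxiliary equivalence between $(\exists x\in D)\,x=y$ and $y\in D$---while the paper's route, though more oblique, deliberately exhibits the dual pair $\forall/\exists$ at work, which matters for the paper's later thematic use of duality (subsection on duality and the dual quantifier equations). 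The side issue you flag, reading the defining equations as rules stable under side contexts, is equally present in the paper's own proof, so it is a shared background assumption rather than a gap in your argument.
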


\begin{proof}
We first observe that, since the sequent $(\forall x\in D) A(x), z\in D\vdash A(z)$ is always derivable, by substitution (it follows from equation \ref{defforall} putting $\Gamma=(\forall x\in D) A(x)$,  substituting $z/t$ in it and then cutting the true premise $t\in D$),  one has $(\forall x\in D) A(x)\vdash A(t)$ {\em forall} $t\in D$, and hence the sequent
   $(\forall x\in D) A(x)\vdash (\forall_\omega x\in D) A(x)$ is always derivable, by the definition \ref{defforallomega} of  of $\forall_\omega$.
   Similarly, $(\exists_\omega x\in D)x=z \vdash (\exists x\in D)x=z$ is derivable from  $z\in D, x=z \vdash (\exists x\in D)x=z$ by substitution $z/t$ and definition of $\exists$.

   Let us assume $z\in D\vdash (\exists_\omega x\in D)x=z$, and let us consider the set of sequents $(\forall_\omega x\in D) A(x)\vdash A(t)\;\mbox{{\em forall}} \;t\in D $, that are derivable by the definition \ref{defforallomega}. Then $(\forall_\omega x\in D) A(x), t=z\vdash A(z)\;\mbox{{\em forall}} \;t\in D $, by the equality rules. 
   Then $(\forall_\omega x\in D) A(x), (\exists_\omega x\in D)x=z\vdash A(z)$ by definition of $\exists_\omega$ above. Then, by the assumption $z\in D\vdash (\exists_\omega x\in D)x=z$, one has  $(\forall_\omega x\in D) A(x), z\in D\vdash A(z)$, from which $(\forall_\omega x\in D) A(x)\vdash (\forall x\in D) A(x)$ by definition \ref{defforall} of $\forall$.

   Let us assume $(\forall_\omega x\in D) A(x)\vdash (\forall x\in D) A(x)$, that means $(\forall_\omega x\in D) A(x), z\in D \vdash  A(z)$ by definition of $\forall$ and put $A(x)\equiv x\neq y$ in the last. By duality one gets $z=y, z\in D\vdash (\exists_\omega x\in D)(x=y)$, then, by definition of $\exists$,
   $(\exists x\in D)x=y\vdash (\exists_\omega x\in D)x=y$, that means $y\in D\vdash (\exists_\omega x\in D)x=y$, since one can see that $(\exists x\in D)x=y$ and $y\in D$ are equivalent.
\end{proof}
Then we maintain that the definition of $\forall$, where the variable is internal, can form an infinite logical object, whereas the definition of $\forall_\omega$, where the variable is a parameter, can form a finite logical object: this is independent of the nature, finite or infinite, of the quantification domain $D$, as considered at the metalevel. The infinite rather than finite nature of the domain $D$  is given by the adoption of the variable in the object language. 

All the above facts  reveal the important yet somewhat hidden consequence that our usual  view of singletons is too “finitistic". For, one usually assumes that a singleton  is extensionally characterized by its element and hence adopts a closed term $u$ for it, assuming 
\ref{extsingl},
and  hence identifying the representation $(\forall x\in D)A(x)$ with $(\forall_\omega x\in D)A(x)$, that is $A(u)$. However, such a description may be only partial with respect to the original thing one should grasp. A set of words (in an ideal setting possibly infinite, this is not the point here, as specified above) $\{u_1,\dots ,u_n, \dots\}$ might be more appropriate in order to find a complete description of such a thing. Assuming $x\in D\vdash (\exists_\omega x\in D)A(x)$,  the description so obtained would be $(\forall_\omega  x\in D)A(x)$, that, in presence of the assumption, would be equivalent to $(\forall x\in D)A(x)$, as one can prove (see \cite{BBL2}). However, the assumption might be not the case. So, the original description $(\forall x\in D)A(x)$ is not recovered.
\medbreak
So the second question is: 
\begin{center}
{\em How to establish a mental coherence, from the different items coming from the contact with reality? }    
\end{center}

\subsection{Partial and total objects in psychoanalytic theories}\label{psych}

In order to fully comprehend which kind of contribution the psychoanalytic conceptualization can give, in order to establish  a pre-logical setting, we need to recall that objects, as dealt with by psychoanalysis, are  representations always equipped with affects, whose origins can be traced back to the first interactions of the infant with their caregiving environment \cite{Lg20}.

A fundamental contribution on how a mental object can be created comes from the Kleinian theory:
according to Klein \cite{Kl}, the first contact with the external reality is characterized by the separation between the positive-valued experiences ({\em good object}) and the negative-valued experiences ({\em bad object}). Experiencing such an original characterization, the infant's mind emerges from the condition of an indifferentiated and fusional relationship with the external reality. As a consequence the construction of a mental object must necessarily include the integration of the previously splitted and hence separated components ({\em good object} and {\em bad object}) into a {\em total object}.

So, thoughts are derived in the integration of previously non integrated elements: it is the switching from the so-called {\em Paranoid-Schizoid Position} to the {\em Depressive Position}. 
The Kleinian Theory, by according a preference to the term {\em Position} rather than adopting “developmental stage", stresses how the integration processes are constantly alternated to processes in which total representations are re-splitted, in order to foster the emergence of new aggregations. Then, two primitive abstract opposite attitudes should be implemented in our mind: the separating and the  integrating one. 

The alternation of the two movements, integration and splitting, usually denoted by $PS \leftrightarrow D$, has been recognized as the basic dinamics of the processes of thoughts by Bion,  who's theory refers to the process of learning from  experience, that is, to the way in which new concepts ideas can be derived by the mind in contact with external reality, see \cite{BiL}. According to Bion, the contact with an unknown reality can induce the Preconception of the existence of an object; once we assume an object is present a mental Container is created where all experiences about that object (the Contained) can be allocated. A relevant feature of the theory concerns the hypothesis, derived from clinical observations, that the Container and the Content are linked in reciprocal interaction \cite{BiL}, and the fitting of the two cannot be given for granted. Then the theoretical description of  Container-Contained interaction includes all the possible results of the contact with reality, even if it is not confined to them. In the contact with reality, three different cases can be given:
\begin{itemize}
    \item The realization corresponds to the preconception: it is the positive case, a concept can be derived from the contact with reality. The positive case corresponds to the acceptance of the representation of reality one has achieved by means of infinite singletons. It is made possible by the {\em Convivial Link} in the 
    {\em Container-Contained} interaction.
    \item The realization does not correspond to the preconception: negative case, the representation must be rejected.
    The negative case is the rejection of  the representation of  reality one has achieved, that means its repression, in Freud. 
    \item The experience of reality is unbearable, therefore the possibility to create a representation is destroyed. The so called “attack to the link", or {\em Parasitic Link}, is created in the {\em Container-Contained} interaction, yielding 
    a failure of the contact with reality, entailing in turn a failure of the process of representation.
\end{itemize}

\section{Quantum model and modal projector}
Going back to our two questions about coherence, in the following, we introduce a proposal to answer to the first question which allows for an analysis of the second question too. 

The first question is addressed by Freud in the moving from his First Topic, \cite{Fr00},  to his Second Topic, \cite{Fr23}. According to Freud, the obstacle to the spreading of symmetry is supplied by the introduction of a normative instance moderating the encounter of the psychic dimension with the external reality.
In logic, normativity is described by means of modal operators.
In the following, we model the normative dimension of the theory by considering a quantum approach, since the quantum world is a natural model to discuss coherence and its confinement.
Our basis is the representation of quantum states in first order language \cite{Ba14}, and the successive introduction of the abstract projector as a modal abstraction in the spin model \cite{BBL2}.

\subsection{Infinite singletons and quantum spins}
As seen in \cite{BBL2}, the quantum measurement of the state of a quantum particle $\cal A$ with respect to a given spin observable $\sigma_d$ (direction $d$) is associated to an equation of the form
\ref{defforallomega}. The domain $D=D_d$ is the set of outcomes of the spin measurement (with the associated probabilities) along the direction $d$:
$D_d=\{t_1,\dots ,t_n\}$: for the spin measurement, $n=1$ or $n=2$ and $t_i=(s_i,P\{X_d=s_)\})$, where $s_i\in \{\uparrow_d, \downarrow_d\}$ 
and $P\{X_d=s_i)\}>0$). Hence the mixed state after measurement is characterized by the formula
$$(\forall_\omega t\in D_d)A(t)$$ 
By importing the parameter as an internal variable of the language, we have equation \ref{defforall}, defining the universal quantifier, with $D=D_d$. 
$D_d$ is an infinite singleton prior to measurement, according to proposition \ref{propequiv}, since the equivalence between $z\in D_d$ and $(\exists_\omega x\in D_d)x=z$ is not true, see \cite{BBL2}. The pure state is then described by the universal formula:
$$
(\forall x\in D_d)A(x)
$$
Moreover, the following sequent describes  the quantum measurement of the particle:
$$(\forall x\in D_d)A(x)\vdash (\forall_\omega t\in D_d)A(t)$$
It is derivable by substitution of the variable by the closed terms denoting the elements of $D_d$, see \cite{BBL2}.
The converse sequent is not true. This fact is an aspect of incompleteness, both in logic and in physics. In logic, it allows for the unprovability of G\"odel's diagonal sentence \cite{Go31}. In quantum physics, it corresponds to the irreversibility of the measurement, so that the original information contained in the pure state is not recovered: as is well known, a fact long debated in the foundations \cite{Ja}.

\subsection{The Bloch Sphere and its constants}\label{Bloch}
Let us consider the spin model pictured in the Bloch Sphere: as is well known, in the sphere, the  points  of the surface  correspond to projectors (pure states), in particular any two antipodal points in the direction $d$ represent the two eigenvectors of the spin observable $\sigma_d$, while the inner points correspond to the convex combinations of projectors, namely they correspond to mixtures.

\begin{center}
     \includegraphics[width=1.0\textwidth]{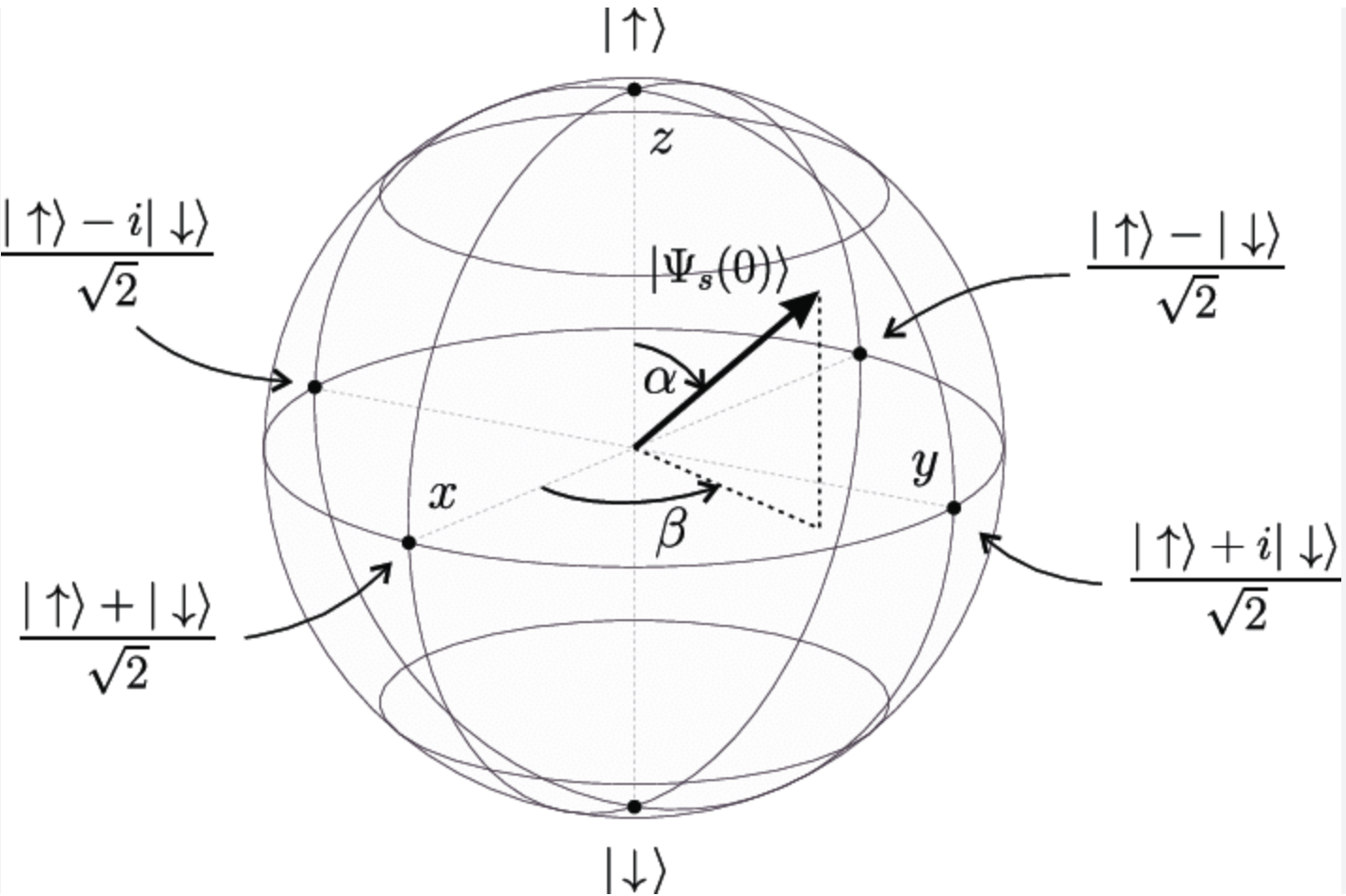}
     \end{center}

Ipso facto, considering the Bloch Sphere, two logical constants are created. The first corresponds to the maximally mixed state given by the couple of eigenstates of any observable $\sigma_d$, described by the convex combination $\dfrac{1}{2}P_\downarrow+\dfrac{1}{2}P_\uparrow$, that is the same for every $d$. So we characterize such a state by a constant, $\perp$. It is associated to the center of the Bloch sphere, no information.  It represents the totally-non integrated, separated element.
 
The second corresponds to the singlet state $\dfrac{1}{\sqrt{2}}|\uparrow \downarrow>-\dfrac{1}{\sqrt{2}}|\downarrow\uparrow>$, that  is independent of $d$ as well.  One might observe that the singlet state is attributed to a couple of particles and hence it does not fit in the Bloch Sphere. However, the two particles of the couple are indistinguishable, totally dependent  on each other (see \cite{Kr}) and together form a pure state independent of $d$. So let us associate the singlet state with the whole surface, the whole information, denoted by the constant $1$.   It represents  the totally integrated, non-separable object.

Let us assume that $\perp$ and $1$ represent two abstract entities for the separating and the integrating position, respectively, namely the pure Paranoid-Schizoid and the pure Depressive Position proposed  by Klein, see subsection \ref{psych}.  Let us  consider an operator $*$ describing the switching from one Position to the other: $\perp^*=1$ and $1^*=\perp$. In the following, we shall see that $*$ corresponds to a duality.

\subsection{Introducing the modal projector}
Let us assume that we can characterize a particle $\cal A$ in a quantum system. Then its spin measurement in direction $d$ finds a domain $D_d$, the description of the associated  mixed state as $(\forall_\omega x\in D_d)A(x)$ and the description of the pure state as $(\forall x\in D_d)A(x)$, by
 the above definitions \ref{defforallomega} and \ref{defforall}.
 In the Bloch Sphere a dot on the surface and an inner dot are characterized. Let us abstract with respect to $d$, so that no privileged direction is adopted. Dropping the domains $D_d$ and the closed terms and variables for its elements from \ref{defforall}, we get the unique form
\begin{equation}\label{defmod}
\square \Gamma\vdash \square A\;\equiv\; \square \Gamma\vdash A    \end{equation}
Observe that the same definition is an abstract form for  \ref{defforall} and for \ref{defforallomega} as well. Then, 
$\square$ is like an abstract quantifier, located in between the infinite quantifier $\forall$ and the finite one $\forall_\omega$.

One can prove that equation \ref{defmod} defines the modal operator $\square$ of S4 (the necessity operator, see \cite{BBL2}). Then $\square$ satisfies
\begin{equation}\label{S4}
 \square\square A=\square A   
\end{equation}

Then, in quantum terms, $\square$  can be interpreted as an abstract projector, the {\em modal projector}.  Since it is located in between $\forall$ and $\forall_\omega$, it can have both an infinite/undefined and a finite/defined interpretation: namely, underneath, it can depend on an internal variable or it can gather  externally parametrized objects.
So, it can attribute a  sharp yet undefined state to the particle.

\section{Components of the modality and their interpretations}

The above equation \ref{defmod} is the result of an abstraction on the variable $d$ of \ref{defforall} applied when the observable $\sigma_d$ is considered, namely: 
$$
\Gamma(-z)\vdash (\forall x\in D_d)A(x) \;\mbox{if and only if}\; \Gamma(-z), z\in D_d \vdash A(z)
$$
Then, $\square$ itself is considered as an abstract quantifier, applied to its own domain $T$, that is the container of all the results $s(d)$ given by the variation of $d$. We conceive $T$ as an infinite singleton, the total infinite singleton, generalizing the particular ones created by each  direction $d$. Then we should have
\begin{equation}
\square A\equiv (\forall x\in T)A_\forall (x)   
\end{equation} where $A_\forall (d)$ abbreviates $(\forall x\in D_d )A(x)$: in the following we write simply $A(d)$. For, if conceived as a particular quantifier, 
$\square A$ is given by the equation, analogous to \ref{defforall}:
\begin{equation}\label{defmodinf}
\square\Gamma\vdash \square A \;\mbox{if and only if}\; 
\square\Gamma, s(d) \in T\vdash A(d)
\end{equation}
where $\square$ includes the closure with respect to the variable $d$ and all the variables for the domains $D_d$. In such an interpretation, $\square A$ is the unique, infinite and complete object capturing the unique, sharp yet undefined state of the particle. 

On the other hand, equation \ref{defmod} extends equation \ref{defforallomega} too, and we have the following equation producing the finite interpretation of $\square A$: 
\begin{equation}\label{defmodfinh}
\square\Gamma\vdash \square A \;\mbox{if and only if}\; 
\square\Gamma\vdash A(t) \;\mbox{forall}\; t: s(d(t))\in T
\end{equation}
Here we can conceive $t$ as a temporal parameter to which the direction $d=d(t)$ is associated (we can write $\square \Gamma$ in the premises since the hypothesis, that is the preparation of the quantum particle, is always the same and hence independent of $d$, and independent of all the variables for the domains $D_d$, then it is closed with respect to all the variables, see \cite{BBL2}).

This implies that the domain $T$ has also a finite view and so one has $\square A=(\forall_\omega x\in T)A(x)$. Actually, it has a lot of finite views, since the externalization of the temporal parameter creates an order and hence the object so formed can be considered to be sensitive to the initial state of the particle. This is better seen by considering that the spin observable is algebrically  represented as a Hermitian matrix in the space of $2\times 2$ complex matrices, and that, in linear algebra, the process of application of a variable Hermitian matrix with respect to a constant vector, considered in \ref{defmodfinh}, is equivalent to the process of application of a fixed initial Hermitian matrix with respect to a variable vector: as is well known the last is termed the “Schroedinger Picture" of the process. Then  equation \ref{defmodfinh} can be rewritten in Schroedinger picture, putting in evidence the fixed observable $\sigma_d$ as an assumption,   and the dependence on $d$ of the operator $\square_d$ so created:
\begin{equation}\label{defmodfin}
\square\Gamma\vdash \square_d A \;\mbox{if and only if}\; 
\square\Gamma, \sigma_d \vdash A(t) \;\mbox{forall}\; t: s(t)\in T  
\end{equation}
where $s(t)$ describes the variation of the state of the particle with time, for example on the surface of the Bloch sphere, so that $s(t)$ can be associated to a direction and hence the set $T$ can be adopted again as a domain. Here $A(t)$ is to be interpreted as $(\forall x\in D_t)A(x)$, where $D_t$ is the domain characterized by the observable $\sigma_d$ applied to the particle, that is in state $s(t)$ at time $t$. Notice that $\sigma_d$ is a strong assumption, that can change the quality of the operator $\square_d$ associated to it, as we shall see. 
Moreover, it is important to notice the following:
\begin{theorem}
 $\square_d A$ is a closed formula, namely    
 $\square_d A=\square \square_d A$ for every $d$.      
\end{theorem}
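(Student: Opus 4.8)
The plan is to establish the two sequents $\square\square_d A \vdash \square_d A$ and $\square_d A \vdash \square\square_d A$ separately; together with the antisymmetry of $\vdash$ they give the claimed equality $\square_d A = \square\square_d A$, and the argument will be uniform in $d$ since $\sigma_d$ is fixed for each direction. For the first sequent I would simply invoke the reflexivity of the S4 modality $\square$: instantiating the defining equation \ref{defmod} at $\Gamma := B$ and $A := B$ and using the identity axiom $\square B \vdash \square B$ yields $\square B \vdash B$ for every $B$; taking $B := \square_d A$ gives $\square\square_d A \vdash \square_d A$ at once.

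The second sequent is the substantive half, and here I would exploit the shape of the defining equation \ref{defmodfin}. The crucial observation is that its left-hand context always occurs in the boxed form $\square\Gamma$, exactly as in the right-rule \ref{defmod} of the S4 modality. In the terminology of basic logic this means $\square_d$ is defined reflectively (closed on the left), so every formula $\square_d A$ is itself a closed formula and may legitimately serve as a boxed antecedent. Granting this, I would apply \ref{defmod} with the content slot taken to be $\square_d A$: the equivalence then reads $\square_d A \vdash \square\square_d A$ iff $\square_d A \vdash \square_d A$, and since the right-hand sequent is the identity, the left-hand one follows.

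To make the reflectivity rigorous I would show that $\square_d A$ is genuinely of the form $\square(-)$, so that \ref{S4} applies directly. The assumption $\sigma_d$ in \ref{defmodfin} is a fixed preparation of the particle, independent of the internal variables over which $\square$ closes; hence it is itself closed and can be absorbed into the modal content, leaving outside an application of the same closure operator that defines $\square$ through \ref{defmod}. Consequently $\square_d A$ lies in the image of $\square$, and the idempotency \ref{S4} ($\square\square A=\square A$) specialises to $\square\square_d A=\square_d A$.

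The main obstacle I anticipate is precisely this last justification: arguing that the added hypothesis $\sigma_d$ does not break the boxed form of the antecedent, so that $\square_d A$ still qualifies as closed. Once one is satisfied that fixing the observable merely reindexes the closure — the Schroedinger-picture rewriting leading from \ref{defmodfinh} to \ref{defmodfin} — without reintroducing a free internal variable, the idempotency is immediate from \ref{S4}, and the remainder is routine manipulation of the defining equivalences \ref{defmod} and \ref{defmodfin}.
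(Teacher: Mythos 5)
Your proposal is correct and rests on the same key observation as the paper's own (one-line) proof: the defining equation \ref{defmodfin} has already applied the closure over all elements of $T$, and the fixed assumption $\sigma_d$ introduces no free internal variable, so $\square_d A$ lies in the image of $\square$ and the idempotency \ref{S4} yields $\square\square_d A=\square_d A$. The sequent-splitting scaffolding in your first two paragraphs (deriving $\square\square_d A\vdash\square_d A$ from reflexivity and the converse from \ref{defmod}) is extra detail the paper omits, but the substance of the argument is identical.
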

\begin{proof}
In \ref{defmodfin}, the closure with respect to all the elements of $T$ that allows to derive $\square_d A$ has been applied.     
\end{proof}

Now the nature of $T$ and $\square$, finite and infinite, and the different finite ways in which $T$ and $\square$ unfold, can be analyzed considering the algebraic decomposition  of the generic spin observable, represented as an element of the subspace of the Hermitian matrices in the space of $2\times 2$ complex matrices: any Hermitian matrix $\hat O$, can be decomposed as the real linear combination of four components: the three Pauli matrices $\sigma_X, \sigma_Y, \sigma_Z$ corresponding to a set of three orthogonal spin directions (say $x,y,z$ respectively),  and the identity $I$. Namely:
\begin{equation}\label{decomp}
\hat {O}= \alpha I+ \beta_x \sigma_X+ \beta_y \sigma_Y + \beta_z \sigma_Z    
\end{equation}
where $\alpha, \beta_x, \beta_y, \beta_z$ are real numbers (see \cite{AFP}).

 The decomposition provides us four different basic  ways to give a content to
$T$ and an identity to the operators $\square_d$ defined by \ref{defmodfin}, as we see. 
\subsection{Again the infinite case}
The case $\alpha =1$ and $\beta_x=\beta_y=\beta_z=0$ in \ref{decomp}, that is the identity case,  is not associated to a fixed direction $d$, since the eigenvectors of the identity are in all directions. 
This means that  the identity adopted as an observable could enable us to know all the truth concerning the state of the particle, in all directions, at the same time. Namely, we should put “all directions" instead of one direction $d$ in the assumption $\sigma_d$ of the equation \ref{defmodfin}. That  is, the assumption becomes like the internal assumption $d\in T$ of equation \ref{defmodinf}: the external parameters are internalized as a variable. Then we find 
the infinite interpretation \ref{defmodinf} of the equation \ref{defmod} defining $\square$ and the infinite interpretation of $T$ as an infinite singleton, supplying  a complete answer in order to characterize the state of the particle. However, no finite description can be given. Moreover notice that, by decomposition \ref{decomp}, if $\alpha\neq 0$, the generic operator $\square_d$ defined by \ref{defmodfin} has an infinite component. 
In Bionian terms, such a feature corresponds to the {\em Symbiotic Link} of the {\em Container-Contained} relationship.

\subsection{The positive and the negative finite cases}\label{zx}
The three remaining elements $\sigma_X,\sigma_Y,\sigma_Z$ of the basis of the space of $2\times 2$ complex matrices are associated to the three orthogonal directions $z,x,y$.  
We first discuss the case $\beta_z=1$, that is  the Hermitian matrix $\hat O$ is $\sigma_Z$, that is $d$  coincides with the direction, say $z$, chosen for the preparation of the particle itself. This means that any measurement  gives back the “exact" probabilities for the state of the particle; in particular, if the particle has been prepared in the basis state “up", the answer is always “up", if the  particle has been prepared in the basis state “down", the answer is always “down". The  matrix $\sigma_Z$ can be written as the linear combination of the two projectors on the basis vectors, the projector on the vector “up", and the projector on the vector “down". Then, overall, the modality $\square_z$ defined putting $\sigma_Z$ in \ref{defmodfin}, creates a unique abstract projector, namely one has $\square_z\square_z A=\square_z A$, that is the definition of projector.
Then $\square_z$ satisfies the clause characterizing $S4$.
For this reason, we associate such a case with the finite interpretation of the equation \ref{defmod} and write simply $\square$, since equation \ref{defmod} is unique. We recall that, in quantum mechanics, any projector is associated to a fixed vector.  Here, let us associate the abstract projector   to an  abstract element, say $p$. $p$ is the abstract positive witness, namely the element that enables to evaluate the state of the particle consistently with the data from reality. Then we interpret $T$ as the finite singleton $\{p\}$, and then $\square A=A(p)$  (see \cite{BBL2}). $A(p)$ can supply finite descriptions of objects. In logic,  $p$ enables the substitution of  variables by closed terms: for, by definition, the operator $A(p)$, at time $t$, gives the result of the measurement by the observable $\sigma_z$ of the particle in state $s(t)$ represented by $(\forall x\in D_t)A(x)$, that is the  finite information
$(\forall_\omega x\in D_t)A(x)$: we have $A(p)(t)=(\forall_\omega x\in D_t)A(x)$.

The case $\beta_x=1$ is when the direction $d$ is $x$, orthogonal to the direction chosen for the preparation of the particle and the Hermitian matrix of the observable is $\sigma_X$, the off-diagonal real unitary matrix that switches the eigenvectors of $\sigma_Z$ (and conversely). Reasoning by analogy with the former case, it  can be written as the real linear combination of the two antiprojectors, namely the couple of operators one answering “down" when applied to “up" and the other conversely. Then, overall, we interpret the modal operator $\square_x$ as the abstract antiprojector, and label it $\square_n$. Let us characterize the result of the abstract antiprojector: the negative witness $n$. Then $\square_n A=A(n)$. Since in quantum mechanics two states are distinguishable if and only if they are orthogonal,   and since the results of the projector and of the corresponding antiprojector are orthogonal, $z=n$ means $z\neq p$.  The pair $(p,n)$ is an abstract form of the pairs $(\uparrow_d,\downarrow_d)$, that all together define the constant $\perp$ as seen above. 
Indeed, in the model, one finds the following form of non contradiction:
 $$
\square A, \square_n A\vdash \perp
$$ 
that we could consider as a “modal uncertainty", since it comes out to be the translation of the incompatibility between the observables $\sigma_z$ and $\sigma_x$ (see also the final part of the paper for this),  as proved in \cite{BBL2}:

\begin{theorem}
The non contradiction law, under the form  
 $$
\square A, \square_n A\vdash \perp
$$  
can be interpreted in the quantum model of modal operators.
\end{theorem}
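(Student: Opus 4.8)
The plan is to read both sides of the sequent through the identifications fixed in subsection \ref{zx} and then to check that the constant $\perp$ of subsection \ref{Bloch} is forced in the quantum model. First I would recall that $\square A = A(p)$, with $\square = \square_z$ the abstract projector attached to $\sigma_Z$ and $p$ the positive witness, and that $\square_n A = A(n)$, with $\square_n = \square_x$ the abstract antiprojector attached to $\sigma_X$ and $n$ the negative witness. Since the result of a projector and that of its antiprojector are orthogonal, hence distinguishable, the witnesses obey the exclusivity clause $z = n \equiv z \neq p$ already recorded there: no single variable can witness both $p$ and $n$ at once. On the syntactic side this is enough to see that placing $\square A$ and $\square_n A$ jointly in the antecedent puts $A(p)$ and $A(n)$ together, and, by the exclusivity clause, these cannot collapse to one definite closed term; taking the equality predicate and substituting the relation $z = n \equiv z \neq p$ (in the style of the proof of Proposition \ref{propequiv}) shows that the joint assumption is realized by no single pure direction, so it cannot reduce to an abstract projector $\square_d A$ and must fall back on the direction-free, information-free constant.

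The quantum interpretation is the heart of the matter. Here I would send $\square A$ to the eigen-projector $P_p = P_\uparrow$ reached by $\sigma_Z$ and $\square_n A$ to the eigen-projector $P_n = P_\downarrow$ reached by the antiprojector of $\sigma_X$, and read the comma in the antecedent as the equal-weight aggregation of the two pieces of sharp data. Since $(p,n)$ is the abstract form of $(\uparrow_d,\downarrow_d)$, this aggregation is the convex combination $\dfrac{1}{2}P_p + \dfrac{1}{2}P_n = \dfrac{1}{2}P_\uparrow + \dfrac{1}{2}P_\downarrow$, which is exactly the center of the Bloch sphere, i.e. the constant $\perp$ of subsection \ref{Bloch}. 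The point to stress is that this equal-weight landing at the center is forced by the incompatibility of $\sigma_Z$ and $\sigma_X$: a $\sigma_Z$-eigenstate decomposes in the $\sigma_X$-basis with coefficients of equal modulus, so requiring a sharp value along $z$ (witness $p$) together with a sharp value along the orthogonal $x$ (witness $n$) is realizable only by the maximally mixed, information-free state. This delivers $\square A, \square_n A \vdash \perp$ and exhibits it as the announced modal uncertainty, consistent with the derivation in \cite{BBL2}.

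The main obstacle I anticipate is justifying that the joint assumption yields precisely the symmetric convex combination, and hence the center, rather than some other inner point of the sphere. This is where the two ingredients must be used together: the orthogonality of the projector and antiprojector outcomes guarantees that the two data are mutually exclusive, while the non-commutativity $[\sigma_Z,\sigma_X] \neq 0$ forces the modulus-equal decomposition that pins the weights at $\dfrac{1}{2}$. Making rigorous the passage from the logical comma on the left of $\vdash$ to the convex combination of the corresponding states, so that these two facts fix $\perp$ uniquely, is the step that carries the content of the proposition; once subsections \ref{Bloch} and \ref{zx} are granted, the remaining identifications $\square A = A(p)$, $\square_n A = A(n)$ and $\dfrac{1}{2}P_\uparrow + \dfrac{1}{2}P_\downarrow = \perp$ are routine.
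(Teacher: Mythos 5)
Your identifications are the right starting point ($\square A = A(p)$, $\square_n A = A(n)$, the exclusivity clause that $z=n$ means $z\neq p$, and $\perp$ as the centre of the Bloch sphere), but the step that actually produces $\perp$ is missing --- and you say so yourself. Reading the comma on the left of $\vdash$ as an equal-weight convex combination $\tfrac{1}{2}P_\uparrow+\tfrac{1}{2}P_\downarrow$ is nowhere licensed by the model: the paper defines $\perp$ \emph{as} that convex combination in subsection \ref{Bloch}, but it never gives the logical comma a convex-combination semantics, so the identification of ``$\square A, \square_n A$'' with the mixed state is exactly the unproven bridge. Appealing to $[\sigma_Z,\sigma_X]\neq 0$ to pin the weights at $\tfrac{1}{2}$ presupposes that bridge rather than supplying it, and your first paragraph's alternative (``cannot reduce to an abstract projector $\square_d A$ and must fall back on the information-free constant'') is an assertion of the conclusion, not a derivation.

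The idea the paper actually uses, and which your proposal never invokes, is the \emph{infinite} interpretation of the modality: $\square A = (\forall x\in T)A(x)$ with $T$ an infinite singleton. In an infinite singleton any two unspecified elements are provably equal, since $(\exists x\in T)x=z$ is equivalent to $(\forall x\in T)x=z$; hence the generic unique element $z$ of $T$ can serve as witness for \emph{both} assumptions at once, so that $A(p)$ and $A(n)$ yield $z=p$ and $z=n$ for the same $z$. Combined with the clause you did record ($z=n$ means $z\neq p$), the sequent becomes literally $z=p,\; z\neq p \vdash \perp$, an instance of non contradiction expressed as the law of identity. No weights, no state aggregation, no commutator: the entire content is carried by the condensation (equality) property of the infinite singleton $T$. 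To close your argument along the paper's lines, switch $\square A$ to its infinite form and use that property; as written, your proof replaces the one step that matters with an interpretation you acknowledge you cannot make rigorous.
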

\begin{proof}
    Let us write $\square A$ in its finite form $A(p)$ and rewrite $\square_n A$ as $A(n)$. Thus, modal uncertainty is rewritten $A(p), A(n)\vdash \perp$. On the other side, as seen above, in its infinite form, $\square A$ is $(\forall x \in T)A(x)$, where $T$ is an infinite singleton. Then, let us assume both $A(p)$ and $A(n)$. Then, if $z$ is the generic unique element of $T$, one has both $z=p$ and $z =n$. We can adopt the same letter $z$ since any two unspecified elements chosen in an infinite singleton can be proved to be equal, as implied by the definition of infinite singleton:  for, $(\exists x\in T)x=z$ is equivalent to $(\forall x\in T)x=z$. Then, since $z =n$ means $z\neq p$, one finally has the equivalent writing:
$z=p, z\neq p\vdash \perp$
which is an instance of non contradiction, expressed under the form of the law of identity.
\end{proof}

In conclusion the modality $\square_n$ includes a negation. $\square_n$ answers with data  contradicting  the data from reality: at at any time $t$, it gives the statistics of the state orthogonal to $s(t)$: $A(n)(t)= (\forall_\omega x\in D^\perp) A(x)$, where, if $D_t=\{(\uparrow, \alpha), (\downarrow, \beta)\}$,  $D_t^\perp=\{(\uparrow,\beta), (\downarrow,\alpha)\}$, if $D_t=\{(\uparrow,1)\}$, $D_t^\perp=\{(\downarrow,1)\}$, if $D_t=\{(\downarrow,1)\}$, $D_t^\perp=\{(\uparrow,1)\}$.   
We can make such a negation explicit by defining a negation $\neg$ on modal formulae $\square A$ as follows:
\begin{equation}\label{defneg}
\neg \square A \equiv \square_n A    
\end{equation}
Notice that then, if the infinite interpretation of $\square$ is considered, the negation $\neg$ acts as a “finitizer". Then double negation does not assert.  For, let us apply $\neg$ to $\neg \square A =\square_n A$: this is conceivable since, for any $d$, $\square_d A=\square\square_d A$, is closed, as seen above. Then, $\neg \square_n A=\neg\neg \square A$ is not $\square A$, but only its finite part. For $\square_n A$ is finite and $\neg$ cannot recapture the infinite content of $\square A$.
This accounts for the intuitionistic interpretation of negation.

In the psychoanalytic interpretation,
 the idea of negation as a finitization agrees  with the Freudian idea that negation and contradiction characterize the advent of the Secondary Process, that can take into account the finite elements coming from reality, whereas negation in the Primary Process  is impossible, \cite{Fr00}.
 Moreover, one finds the exact correspondence with the idea of negation as discussed by Freud in his paper {\em Negation} ({\em Die Verneinung}), \cite{Fr25} . The Freudian conception considers negation as the
intellectual counterpart of repression.
In the quantum model, any object consciously represented by the mind
(word presentation) corresponds to an eigenstate of $\sigma_Z$ . Repression means
that the conscious representation is forgotten and substituted by an
unconscious thing-presentation, which includes the opposite of the conscious one by condensation, and hence by
the superposition of the eigenstates in the quantum model. The last is an
eigenstate of $\sigma_X$. Then, in order to find out the object, the observable $\sigma_X$ must be applied.
In the original computational basis, this means the creation of the abstract antiprojector, that is, the negation operator (see \cite{BBL2}). 

In view of this, we could term $\neg$ so defined “neurotic negation" ({\em Verneinung}), given by the contact with reality and its successive repression, in opposition to a “psychotic negation" ({\em Verleugnung}) that, on the contrary, is derived from the failure of the contact with reality, as we see in the next case.

The  two finite modal components given by $\sigma_Z$ and $\sigma_X$, described by the positive witness $p$ and by the negative witness $n$ respectively, enables speaking, i.e. to form {\em word-presentations} of objects previously obtained as infinite {\em thing-presentations}.  For, as seen above, $A(p)(t)=(\forall_\omega x\in D_t)A(x)$ and $A(n)(t)=(\forall_\omega x\in D_t^\perp)A(x)$, that is: substitution by closed terms and substitution by closed terms denoting the opposite - the last described by Freud in his paper {\em Die Verneinung}, \cite{Fr25}.
In Bionian terms, the finite attitude is related to the so-called {\em Convivial} Container-Contained relationship.

\subsection{The irreal case}
The case $\beta_y=1$ is for the other  direction  $y$, orthogonal with respect to the direction chosen for the preparation. The Hermitian matrix is $\sigma_Y$, the off-diagonal unitary matrix with imaginary entries, that can be thought as a linear combination of the two antiprojectors,  with imaginary coefficients. Then it should characterize a non-real abstract antiprojector $\square_y$, given by the equation \ref{defmodfin} initialized by $\sigma_y$. A contrast is created with respect to the two finite cases $\sigma_z,\sigma_x$. In the Bloch Sphere, given a preparation in the $z$ direction, the qubits in the $xz$ plane differ in the statistics they produce but have relative phase $0$ or $\pi$ corresponding to real numbers $\pm 1$, whereas the qubits in other directions have a different relative phase,  that cannot be given by the measurement.  Out of that plane, the “real" value, that is the phase, is not real indeed!

  The non-real abstract antiprojector $\square_y$ can characterize the third abstract element $e$ (for “empty"), such that $\square_y A= A(e)$. We maintain that $e$ is the rejection witness, 
inducing the rejection of the contact with reality, namely the Freudian {\em Verleugnung}, rather than  the negation of the real value (the Freudian {\em Verneinung}), seen above: as we have just pointed out, the {\em Verneinung} case is the case of the witness $n$ for the real abstract antiprojector $\sigma_n$ associated to $\sigma_X$.

The  {\em Verleugnung} operator $\square_y$ and its witness $e$ can be interpreted in Bion's theory of knowledge \cite{BiL}, as the negative element impeding representations, in contrast  with the positive element $p$ and with the whole container $T$ which make them possible. In particular, in Bionian terms, they correspond to the {\em Parasitic} Container-Contained interaction, as if the Container has “squeezed out" its content up to when the empty state is reached.

Then we work on the hypothesis that $\sigma_Y$,  creating a different kind of negation, gives the opposite of the infinite aspect of the modality $\square$, as given in \ref{defmodinf}. Actually, among the four elements of the basis in decomposition \ref{decomp}, we  distinguish two pairs: the first is $\sigma_Z, \sigma_X$, namely the pair of real and finite opposites. The other is $I, \sigma_Y$. 
It is an infinite/irreal pair, that associates “the total knowledge" with “no knowledge", namely it represents a total opposition. Actually it is a particular view of the opposition between $1$ and $\perp$ that we have first described in subsections \ref{psych} and \ref{Bloch}. The last is the opposition between the two abstract Kleinian positions, the Depressive and the Paranoid-Schizoid respectively, while the “total knowledge" corresponds to the knowledge of a given object (a quantum particle in our quantum model) that is given by the infinite interpretation of the  modal operator $\square$. The no-knowledge operator is described as follows.
We  work on the hypothesis that the opposite modalities given by $I, \sigma_Y$ share similar features. Then let us assume that, as in the case of the identity, the parameter $t$ in equation \ref{defmodfin} for $\sigma_Y$, can be imported as an inner variable, finding something like \ref{defmodinf}. One could think that the parameter is “re-swallowed" as a variable since there is no possibility of instantiation with closed terms coming as data from reality. Then the case is non-finite, that is non-non-infinite. It does not mean infinite: the infinite cannot be recovered, once it is lost. It differs with respect to the case \ref{defmodinf} in the fact that the domain $T$ is converted into a dual domain, $N$,  that is the container of failures. Let us label $\cancel{\square}$  the para-infinite operator that corresponds to $\square_y$ and put its equation:
\begin{equation}\label{defimp}
\square \Gamma \vdash \cancel{\square} A \;\mbox{if and only if }\;\Gamma, z \in N\vdash A(z)
\end{equation}
The modality  $\cancel{\square} $ means “impossible", here defined as a primitive notion. Then it hides a negation, coming from the {\em Verleugnung}, the psychotic negation. In logic, it can emerge under the form of duality, hiding the rejection element $e$ and  the set of failures $N$, as we show below.

\section{Shifting from pre-logic to logic}
\subsection{Duality}\label{duality}
According to  basic logic \cite{SBF}, connectives come in dual pairs since each pair corresponds to a unique metalinguistic link that is imported into the object language on the right  or on the left of the turnstyle $\vdash$, putting pairs of symmetric equations:  each equation has its symmetric form, deriving the dual connective. Actually, the orientation of the turnstyle $\vdash$ amounts to the separation of the two elements of the pair.

Quantifiers are all derived from the metalinguistic link {\em forall}. 
When {\em forall} is directly imported as a connective, adopting parameters, the equation are \ref{defforallomega} for $\forall_\omega$ and \ref{defexistsomega} for its dual $\exists_\omega$. As in the case of propositional connectives and constants (see \cite{SBF}), the dual pair $\forall_\omega, \exists_\omega$ emerges simply by switching the left and the right side with respect to the turnstyle.
The case of the dual pair $\forall$ and $\exists$ is  different. When {\em forall} is imported in order to refer to a variable of the object language, one has equation \ref{defforall} for the universal quantifier $\forall$. As for its dual, the existential quantifier $\exists$, according to its intended meaning, the equation is \ref{defexists}, \cite{MS}, while
 the switching  of equation \ref{defforall}, consistent with  \ref{defexists},  is the following:
\begin{equation}
(\exists x\in D)A(x)\vdash \Delta(-z) \;\mbox{if and only if}\; A(z)\vdash z\notin D, \Delta(-z)   
\end{equation}
It requires the exclusion of a membership: $z\notin D$.  
As we noticed, to exclude a membership is not proper of the Unconscious and hence here something different can really arise.
Let us abstract the above equation, as in the case of the universal quantifier:
\begin{equation}\label{defmoddual}
\Diamond A\vdash \Diamond\Delta \;\mbox{if and only if}\; A\vdash \Diamond \Delta
\end{equation}
defining the modal operator $\Diamond$, that is possibility,
and then consider its nature of abstract quantifier, that is:
\begin{equation}
(\exists x\in T)A(x)\vdash \Diamond \Delta \;\mbox{if and only if}\; A(\xi)\vdash \xi\notin T, \Diamond \Delta 
\end{equation}
then converted into:
\begin{equation}
(\exists x\in T)A(x)\vdash \Diamond \Delta \;\mbox{if and only if}\; \xi\in T, A(\xi)\vdash \Diamond \Delta    
\end{equation}
introducing the notion of possibility as an abstract quantifier on the domain $T$. Actually it states “there is a possibility/there is a direction/a datum is found - in the container $T$", determining but hiding at the same time  the set of failures $N$.

Once a negation is explicitly introduced,  the finite negation $\neg$ derived from the negative operator $\square_n$ in \ref{defneg}, impossibility can be derived by negation. Formally, $(\exists x\in T)A(x)\vdash $ is converted into $\vdash \neg (\exists x\in T)A(x)$. The last, according to the meaning of the existential quantifier, is  $\vdash \neg \exists x(x \in T)\& A(x)$, that is $\vdash \forall x \neg (x \in T \& A(x))$. Now, which conjunct should we negate? The most primitive form of negation available is $x\in N$, the pattern for rejection, that means $x\notin T$. Then we get $\vdash (\forall x\in N)A(x)$, namely $\vdash \cancel{\square} A$. So the negative content of $\sigma_Y$ is translated, in logic, into the emergence of the dual connective $\Diamond$, that is introduced on the left of $\vdash$, 
according to its definition \ref{defmoddual}. The need to separate 
$\square$ and $\cancel{\square}$, namely $T$ and $N$, yields the actual separation of the two dual connectives $\square$ and $\Diamond$, and hence, according to their definition as abstract quantifiers, the possibility to conceive sets that are not singletons. For, if $\square$ is not $\Diamond$, $T$ cannot be a singleton any more. Necessity is  one, the world is is full of possibilities (or: no possibility is found). This is the pattern that carries us out of the symmetric world.

\subsection{Looking for integration}
Let us assume that, from the contact with reality, in a certain context denoted by $\Gamma$, elements $\beta_1,\dots ,\beta_n$ are detected at time $t$. Prior to representation, such elements are not present in the mind yet, they are invisible to the mind. Then, from the point of view of the mind, let us  let us describe such a situation by
$$\Gamma\vdash {\cancel\beta_1},\dots ,{\cancel \beta_n}$$
where the ${\cancel\beta_i}$ have a negative value, they are missing.
Let us see how the mind might reach a representation.
One can first assume that, prior to representation, the process of detection is $\sigma_Y$-dominated. Then, in analogy with the position for the overall abstract modal operator ${\cancel \square}$, let us put the equivalence
$$
\Gamma\vdash (\forall i\in N_t){\beta}_i \;\mbox{if and only if }\;
\Gamma\vdash {\cancel\beta_1},\dots ,{\cancel \beta_n}
$$
where  $N_t$ is a null domain and $\forall i\in N_i$ is a fake  quantification, since no variable is present in order to gather the $\beta_i$ yet, except the indices, that are not part of the $\beta$s anyway. So a non-object is created out of the $\beta$s, since $(\forall i\in N_t){\beta}_i $ is like a monster, namely an object of disassembled parts (see \cite{BiL,BiE}).
The goal for the representation is to achieve an assembled object, associated to the form
$$
\Gamma(-z)\vdash (\forall x\in D_t)A(x)
$$
namely: there is an internal variable $z$, forming a non-null domain $D_t$ gathering all the $\beta$s  that  are then linked by the formula $A(z)$ with free variable $z$, for which $\Gamma(-z)$ is a common context.
In order to achieve it, let us continue the analogy with the $\cancel \square$ case, namely let us assume such a case as an abstract pattern for the mind. So let us assume that, in order to avoid the null domains and fake quantifications, $\Gamma\vdash (\forall i\in N_t)\beta_i$ is dualized,  becoming
$$
(\exists x\in D_t)A(x)\vdash \Delta(-z)
$$
assuming that, in the dualization process, an internal variable $z$ is borrowed elsewhere (the variable cannot be internally generated in the mental world but it must be somehow received from the external reality) and introduced, allowing the substitution of the null domain $N_t$ with a “good" non-empty domain $D_t$. Then, in the symmetric mode proper of the Unconscious,  $D_t$ is conceived as an infinite singleton and $\vdash$ has no orientation, so that one can achieve the representation, described as usual by
$$
\Gamma(-z)\vdash (\forall x\in D_t)A(x)
$$
Finally, once the representation is achieved, one can name the $\beta$s originally detected (for the sake of simplicity we assume that the name is $\beta$) and so get the finite representation $\Gamma\vdash (\forall_\omega x\in D_t)A(x)$, namely
$$
\Gamma\vdash A(\beta_1)\&\dots A(\beta_n)
$$
describing all the items. 

\subsection{More on duality}
Notice that, once the whole process of representation is completed, $\square$ and $\Diamond$ are distinct dual operators. Then, at this level,  one has better to consider that the unknown value of a given object is better described by $\Diamond A\equiv (\exists x\in T)A(x)$ rather than by $\square A\equiv (\forall x\in T)A(x)$, as at the stage preceding the distinction, since in origin the domain $T$ contains also the Parasitic part that should be separated. 
Then we can conclude our introduction of  duality by considering the dual of non contradiction  $\square A, \neg \square A\vdash \perp$, considered in subsection \ref{zx}
\begin{theorem}
The excluded middle law, under the form 
$$1\vdash \neg \Diamond A, \Diamond A$$
can be interpreted in the quantum model of modal operators.     \end{theorem}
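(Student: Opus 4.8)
The plan is to prove the excluded middle law as the exact dual of the non contradiction law established in subsection \ref{zx}, exploiting the $*$-duality of subsection \ref{Bloch}, which exchanges $\perp$ with $1$ and, via the basic-logic switching of equations, $\square$ with $\Diamond$ while reversing the orientation of $\vdash$. Under this duality the left-sided instance of the law of identity $z=p, z\neq p\vdash \perp$ that closed the previous proof should turn into the right-sided instance $1\vdash z=p, z\neq p$, the tertium non datur. So the whole argument is run on the existential/$\Diamond$ side, with $1$ read as the dual true constant (the whole information, the singlet state).

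First I would express the two disjuncts through their witnesses. In its finite form $\Diamond A\equiv (\exists x\in T)A(x)$ collapses to the positive witness, $\Diamond A=A(p)$, exactly as $\square A=A(p)$ in subsection \ref{zx}; dually, $\neg\Diamond A$ supplies the negative witness, $\neg\Diamond A=A(n)$, where $z=n$ encodes $z\neq p$. This reduces the goal to $1\vdash A(n), A(p)$, the mirror image of the reduction $A(p), A(n)\vdash \perp$ used for non contradiction. Next I would invoke the infinite form: since $T$ is the total infinite singleton, $(\exists x\in T)x=z$ and $(\forall x\in T)x=z$ coincide, so any two unspecified elements of $T$ are provably equal and we may take a single generic $z$. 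Reading the two disjuncts on the right through this $z$ gives $z=p$ and $z=n$ as the admissible alternatives, whence, using that $z=n$ means $z\neq p$, the sequent $1\vdash z=p, z\neq p$, the law of identity on the right, i.e.\ the excluded middle, validated by $1$ as the constant carrying the whole information.

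The main obstacle I expect is the careful handling of orientation and of the disjunctive succedent. For non contradiction the two witnesses sit as a conjunction on the left and simply collapse together into $\perp$; here they sit as a (multiplicative) disjunction on the right, and one must check that the infinite-singleton collapse $\exists=\forall$ is legitimately transported across $\vdash$ in this direction, and that the passage from the membership form $A(\xi)\vdash \xi\notin T, \Diamond\Delta$ of subsection \ref{duality} to the membership-free form is compatible with reading $1$, rather than a generic $\Delta$, as the succedent. Equally delicate is justifying that $\neg$ applied to $\Diamond A$ yields precisely the orthogonal witness $n$, dual to the way $\square_n$ produced it for $\square$. Once this identification is secured, the remaining steps are the formal duals of the already-proved computation, so I expect the verification to reduce to transcribing that argument under the symmetry $\perp\leftrightarrow 1$, $\square\leftrightarrow\Diamond$.
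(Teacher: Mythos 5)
Your proposal has a genuine gap: it treats excluded middle as the formal $*$-dual of non contradiction, realized by the \emph{same} pair of witnesses $(p,n)$, but in this model the two laws are not symmetric in that way. Your key step reads $\neg\Diamond A$ as $A(n)$, i.e.\ through the negative witness of the abstract antiprojector $\square_n$ coming from $\sigma_X$ (the neurotic negation, \emph{Verneinung}). The paper's proof instead rewrites $\neg\Diamond A$ as $\cancel{\square}A$, the para-infinite impossibility operator coming from $\sigma_Y$ (the psychotic negation, \emph{Verleugnung}), defined over the failure domain $N$; the definition $\neg\square A\equiv\square_n A$ is given only for $\square$-formulae, and the negation of an existential/$\Diamond$ formula is obtained in the duality subsection precisely as $\vdash\cancel{\square}A$, not as a witness substitution. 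This is not a pedantic distinction: the paper's concluding list of commutation relations assigns non contradiction to the incompatibility of $\sigma_z$ with $\sigma_x$, and excluded middle to the incompatibility of $\sigma_z$ with $\sigma_y$; your argument assigns both to the real $zx$ plane, identifying neurotic and psychotic negation --- a convergence the paper allows only at the level of classical logic, not inside the pre-logical quantum model. Moreover, your appeal to the infinite-singleton collapse $(\exists x\in T)x=z\equiv(\forall x\in T)x=z$ is unavailable here: the excluded middle is stated at the stage where, as the paper stresses, $\square$ and $\Diamond$ are already distinct dual operators, which forces $T$ to no longer be a singleton; and the identity $\Diamond A=A(p)$ is nowhere established.

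The paper's actual proof is physical rather than formal: the constant $1$ is the singlet state of a \emph{two-particle} system (not a generic truth constant carrying ``the whole information''), $\Diamond A$ represents the state of a traced-out particle, $\cancel{\square}A$ represents the unknowability of the other, and the sequent $1\vdash\cancel{\square}A,\Diamond A$ expresses that from the singlet state one of the two alternatives follows for each particle, the two being indistinguishable. Your argument makes no use of entanglement or tracing out, which is the entire content of the interpretation being claimed; to repair it you would need to replace the $(p,n)$ witness computation by the $\cancel{\square}$/singlet reading.
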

\begin{proof}
Let us rewrite $1\vdash \neg \Diamond A, \Diamond A$  as $1\vdash {\cancel \square} A, \Diamond A$. In our quantum interpretation, the constant $1$ is the singlet state. As is well known, once one of the particles of the singlet state has been traced out, namely it can be attributed a state as a single particle, the other is unknowable. We represent the state of a traced-out particle by  $\Diamond A$, the unknowability of a particle by ${\cancel \square} A$. Then the sequent $1\vdash {\cancel \square} A, \Diamond A$ 
describes the true fact that, from the singlet state, one of the two alternatives for any of the two follows (and nobody can know which is which, since the two particles of the pair are indistinguishable).
    
\end{proof}

As a final remark we would like to consider the three commutation relations, that quantum mechanics summarizes by the unique equality $[\sigma_i,\sigma_j]=\pm\dfrac{i}{2}\sigma_k$, where $i,j,k$ is any permutation of $x,y,z$. When the $xz$ plane is characterized in the Bloch sphere (the direction $z$ of the preparation and one direction $x$ in the orthogonal plane), their translation into the pre-logical framework is different and summarized as follows:
\begin{itemize}
    \item the incompatibility between $\sigma_z$ and $\sigma_x$ corresponds to non contradiction, see subsection \ref{zx};
    \item the incompatibility between $\sigma_z$ and $\sigma_y$ corresponds primarily to the opposition between $\square A$ and its contrary ${\cancel{\square}}A$; then,  when duality is adopted, to the weaker form $1\vdash \neg \Diamond A, \Diamond A$, the excluded middle law, as just seen;
    \item the incompatibility between $\sigma_x$ and $\sigma_y$, in rational terms, is avoided reducing it to the two above, by identifying neurotic and psychotic negation, namely by putting a definition of negation after $\sigma_x$, as seen in \ref{zx}, by hiding $\sigma_y$ behind duality, as seen above, and finally  making them converge into the unique negation of classical logic.
\end{itemize}
Indeed, in the construction of our Knowledge, in Bionian terms, we establish the Convivial Link and we  prefer it  with respect to the Parasitic Link, hence we characterize the real $zx$ plane and what follows.

 \subsection{Concluding remarks}
The pre-logical elements so introduced can justify the emergence of finite elements, the separation between assertion and negation and the possibility for a mathematical infinite. The mathematical infinite is nested in the stronger symmetric infinite,  as discussed by Matte Blanco in \cite{MB75}. Analogously, all components of our thinking must be derived, a non-trivial task, from the pure realm of infinite singletons, where such components cannot apply since Symmetry and strong coherence dominate.  We have seen how pre-logical elements can be nested in the realm of infinite singletons, thanks to the introduction of modal operators that allow to shift from the mode of the Unconscious to rational thinking, as we have proved. We would like to recall that, in the evolution of the Freudian theory, the moving from the First to the Second Topic, as expressed in The Ego and the Id \cite{Fr23}, introduces a normative instance moderating the encounter of the psychic dimension with the external reality. Hence, the pre-logical elements must be conceived as the seeds of logical elements, not a logical system itself.
Once the logic of standard infinite sets, standard finite sets, standard dual connectives, negation and contradiction, is established,  the pre-logical possibilities offered by the decomposition \ref{decomp} are partially lost. For, finding a way to recover the full complete information from a finite incomplete one, abstracting  definition \ref{defforallomega} to obtain \ref{defforall}, requires, according to the present model, to  re-integrate the pre-logical component. Indeed, as seen, in order to obtain a variable, one needs the infinite component, on one side,  and the hidden psychotic negation ({\em Verleugnung}), corresponding to the irreal observable $\sigma_y$ representing the failure of the representation, on the other. As seen, such non-finite components   act in the shift from the finite omega-quantifier to the abstract quantifier:  in quantum mechanics, the quantifier represents the pure state. The point is  the relative phase of the qubit, that makes the Bloch sphere three-dimensional, so that the direction is not derivable from a finite information, namely from the statistics given by the measurement process. 
 
All this does not exclude further possibilities in the logical analysis. From a theoretical point of view in logic, there are many intriguing questions. In terms of the sequent calculi forming  the “cube of logics" given in Basic Logic \cite{SBF}, we can formulate some issues, such as, for example: 
\begin{itemize}
    \item the role of intuitionistic logic as a better representative of the “logic of the infinite" with respect to classical logic;
    \item the role of dual intuitionistic logic as a shadow-logic created by the necessity to hide Bion's Parasitic Link;
    \item the role of the basic sequent calculus $\bf B$, which is so weak that its two negations, namely a weak kind of intuitionistic negation, on one side, and  negation given by exclusion, on the other, cannot identify in it, that makes it an interesting platform; 
    \item the study of the structural rules of sequent calculus as rules derived by the need to integrate, see the above subsection;
    \item the role of multiplicative negation as an alternative way to deal with  {\em bags of symmetry}  (Matte Blanco's terminology)
\end{itemize}

In general, one could now even question if the dual language, widespreadly adopted in formalizing logic, probably since it shows useful in hiding irrational elements, could be misleading, for, at the same time, a lot of very useful pre-logical elements are hidden by it. The apparent example is that of intuitionistic logic, which is not dual by nature, however its standard formalization is damned by a dual language. Analogously, the standard logical formalization is damned by first order language, in which separating closed terms from variables is possible only at the metalevel, see \cite{BBL1}. Up to now, by the present proposal, we can point out what, where and why is hidden behind, that is indispensable, we think, to further developments in the right direction. In our opinion, the development of any effective pre-logical basis should include  an explicit account of infinite singletons inside the language, that enables to overcome both dual language and first order terms at the same time. As we have seen, there is an infinite reading of the modality $\square$ that can overcome both. However, quickly summarized in quantum terms, it cannot include the entanglement, at least in the present reading: for, in defining the modality, we have assumed to characterize a particle. Further developments, modal or not, require to include it.  Among the different existing logical frameworks, the varieties of approaches derived from Linear Logic \cite{Gi} should  allow better insights if considered in light of our issues. A different logical framework useful to the analysis is that provided by the Square of Oppositions, also in its modal versions, see the volume \cite{BB}.

Further insights should derive by a further development of the formal approaches to the psychoanalytic models, see \cite{BBL1}. For example, considering the role of the Kleinian {\em Projective Identification}.  This would carry new possibilities in applications, assuming that {\em the unconscious roots of AI lie in a form of projective identification, i.e., an emotional and imaginative exchange between humans and machines.}, as claimed in \cite{Po}.
Any kind of advancing in the logical comprehension would support, as for the field of A.I., applications such as that already proposed in \cite{HH}: 
{\em we attempt a quantum‐computational construction of robot affect, which theoretically should be able to account for indefinite and ambiguous states as well as parallelism.}

\end{document}